\newcommand{\M}{\mathbb{M}}
\newcommand{\nl}{\medskip\noindent}
\newtheorem{theorem}{Theorem}
\newtheorem{lemma}[theorem]{Lemma}
\theoremstyle{definition}
\newtheorem{definition}[theorem]{Definition}
\theoremstyle{remark}
\newtheorem{remark}[theorem]{Remark}
\newtheorem{example}[theorem]{Example}
\newcommand \str [1] {$\langle  {#1} \rangle$}
\newlength{\transwidth}
\begin{document}
\title{ Decidability of the existential fragment of \\
some infinitely generated trace monoids:\\
an application to ordinals}

\author{Alexis B\`{e}s\\
	Universit\'e Paris-Est, LACL (EA 4219), UPEC, Cr\'eteil, France \\
	\and Christian Choffrut\\
	IRIF, CNRS and Universit\'e Paris 7 Denis Diderot, 
	France} 

\maketitle

\begin{abstract}
	Diekert, Matiyasevich and Muscholl proved that the existential first-order theory of a trace monoid over a finite alphabet is decidable. We extend this result to a natural class of trace monoids with infinitely many generators. As an application, we prove that for every ordinal $\lambda$ less than $\varepsilon_0$, the existential theory of the set of successor ordinals less than $\lambda$ equipped with multiplication is decidable.
	
\end{abstract}

\maketitle

%
%\tableofcontents
%
%

\section{Introduction}

Since the publication of the surprising result of Makanin
showing that it is  decidable whether or not an equation in a free monoid with constants
has a solution, 
research continued in different directions.
Notably,  Schulz showed that the decidability 
still holds
when each variable is bound to be interpreted in a predefined regular subset
of the free monoid, i.e., 
as subset recognized by some finite automaton, \cite{schulz90}. 
This allowed 
to extend Makanin's result to trace monoids which can be viewed as free monoids 
where some pairs of generators may commute. Indeed, building on top of 
the results of Makanin and Schulz, Diekert et al. 
were able to prove that it can be decided whether or not
a system of equations in a trace monoid has a solution and more generally
whether or not a sentence of the existential fragment  the theory
of trace monoids provided with the concatenation is valid.

\medskip Our contribution consists of going one step further by considering 
some trace monoids over a countable alphabet, namely those that are the inverse image 
in a generator-to-generator substitution into a finitely generated trace monoid
and to show that the above result still holds in this framework. The idea can be summarized as follows.
Considering infinite generators is no problem as far as equations are concerned 
because  a solution, if it  exists,   can always be assumed 
to map into the submonoid  of the generators 
 appearing in the constants. In contrast, the existential fragment requires  
 the possibility of expressing the negation of an equality. When the monoid is 
 finitely generated this can be done by a finite disjunction of equations
 containing new constants. For infinitely generated trace monoids, this leads to an infinite
 disjunction with infinite constants. The idea is then to observe that the actual 
 values of these constants are irrelevant and can be reduced to a 
number of values that can be bounded a priori, allowing thus
to resort to the case of finitely generated trace monoids.

We apply our result to an issue concerning the 
multiplicative structure of ordinals. Let us recall that the first-order theory of an 
ordinal $\alpha$ with the multiplication as unique operation is undecidable if and only if 
this ordinal is greater  than or equal to $\omega^{\omega}$, see \cite{bes00}. 
We are not aware of any investigation on which fragment, if any,  
of such an ordinal is decidable, except a paper of the present authors,
\cite{besch}.  It happens that the successor 
ordinals less than a multiplicatively closed ordinal $\lambda$ 
($\alpha, \beta<\lambda$ implies  $\alpha\times  \beta<\lambda$) 
form a monoid $\+S_{\lambda}$ which is a free product
of an infinitely generated free commutative monoid  (more precisely
the monoid generated by the ordinary prime integers) and 
an infinitely generated free monoid. The result on traces implies that 
the existential fragment of $\+S_{\lambda}$ is decidable. More precisely we prove that if $\lambda$ is less than $\varepsilon_0$ then the existential theory of the structure $\langle \+S_{\lambda}:  \times,   \{\alpha\}_{\alpha \in \+S_{\lambda}}   \} \rangle$
is decidable. Recall that $\varepsilon_0$ is the least fixed point of the function $x \mapsto \omega^x$. The condition $\lambda<\varepsilon_0$ is here to ensure  that  one can perform effectively operations on constants given by their Cantor normal form.

 This does not settle the 
problem of the decidability of the multiplicative structure of $\lambda$ but we hope
it arises the curiosity of some researchers.

\section{Trace monoids}

Our purpose in this section is to show how the result of 
Diekert and al. \cite{diekert99} for finitely 
generated trace monoid  extends to some type of infinitely generated trace 
monoids. We first give a definition of trace monoids with possibly infinite generators.

\subsection{Traces over possibly infinite alphabets}

The notion of finitely generated trace monoids can be  recovered from the next
definition. 
\begin{definition}
\label{de:infinite}
Let $\Sigma$ be a countable set, $n>0$ an integer, $\Sigma=\bigcup^{n}_{i=1} \Sigma_{i}$
a decomposition  and $I\subseteq \{1, \ldots, n\}\times \{1, \ldots, n\}$ a
 symmetric relation. The relation 
$$
\{(ab,ba) \mid a\in \Sigma_{i}, b\in   \Sigma_{j}, \  (i,j)\in I\}
$$
is the  \emph{independence relation} on
$\Sigma$ induced  by $I$. We denote by   $\equiv_{I}$
the congruence on the free monoid $\Sigma^*$ generated 
by the independence relation
and by $\M(\Sigma, I)$ the quotient monoid $\Sigma^*/\equiv_{I}$, 
also known as the \emph{trace monoid}. The \emph{canonical morphism}
is the mapping which associates with an element of $\Sigma^*$ its
class modulo $\equiv_{I}$. When $\Sigma$ and $I$ are  clear from the context we simply write $\M$.
\end{definition}

Observe that in contrast to the finite case, we do not impose the relation 
$I$ to be irreflexive. If it  were, we would necessarily have a free (noncommutative) 
submonoid.

\begin{example} The free product of a  free monoid 
  generated by  $\Sigma_{1}$ and a 
 free commutative monoid  generated by  a disjoint subset $\Sigma_{2}$ is defined
 by the conditions $\Sigma= \Sigma_{1}\cup \Sigma_{2}$,  $n=2$ 
 and $I=\{(2,2)\}$.
The free product 
of three free commutative monoids is defined by taking three disjoint subsets
$\Sigma_{1}, \Sigma_{2}, \Sigma_{3}$ and considering
the independence relation $I=\{(i,i)\mid i=1,2,3\}$.
The direct product 
of two free  monoids is defined by taking two disjoint subsets
$\Sigma_{1}, \Sigma_{2}$ and considering
the independence relation $I=\{(1,2), (2,1)\}$.
 \end{example}

We will need no sophisticated result on trace monoids, 
only the following  simple combinatorial result stating under which conditions
two traces are different,  see \cite[page 8]{diekert99}. Further reading
on this theory is referred,   for example,  to \cite{diekert90, diekert95}.
\begin{lemma}
\label{le:different-traces}
Two elements 
$u,v\in \M(\Sigma, I)$ are different
exactly under one of the following conditions  (up to exchanging 
the roles of $u$ and $v$)

\begin{enumerate}

\item \label{it:a1} 
$u$ is a strict prefix of $v$.

\item \label{it:a2} 
there exist  $w, w_{1}, w_{2}, w_{3} \in \M(\Sigma, I)$, $a\in \Sigma$ such that
$u=w a w_{1}, v=w w_{2}a w_{3}$ and $w_{2}$ contains no occurrence of $a$ but contains
an occurrence of some $b\not=a$ which does not commute with $a$, i.e., $(a,b)\not\in I$.  

\item \label{it:a3} 
there exist  $w, w_{1}, w_{2},  \in \M(\Sigma, I)$, $a\in \Sigma$ such that
$u=w a w_{1}, v=w w_{2}$ and $w_{2}$  contains no $a$. 

\end{enumerate}
\end{lemma}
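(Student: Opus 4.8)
The plan is to prove the two directions of the ``exactly under'' equivalence separately; I will refer to the three conditions as (1), (2), (3).

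For the direction ``(1) or (2) or (3) implies $u\neq v$'', I would invoke the two obvious trace invariants: the length $|t|$ and, for each letter $a$, the number of occurrences $|t|_a$, both preserved by the defining rewriting $ab\leftrightarrow ba$. If (1) holds then $v=uz$ with $z\neq 1$, so $|u|<|v|$. If (3) holds then $|u|_a=|w|_a+1+|w_1|_a>|w|_a=|v|_a$. Case (2) is the only one needing the independence: since $(a,b)\notin I$, erasing every letter other than $a$ and $b$ induces a morphism $\pi$ from $\M(\Sigma,I)$ to the \emph{free} monoid $\{a,b\}^*$ (the only defining relations that could survive are the vacuous $aa=aa$, $bb=bb$). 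Then $\pi(u)=\pi(w)\,a\,\pi(w_1)$ while $\pi(v)=\pi(w)\,b^{k}a\,\pi(w_3)$ with $k=|w_2|_b\geq 1$; these two words of $\{a,b\}^*$ disagree at position $|\pi(w)|+1$, hence $\pi(u)\neq\pi(v)$ and a fortiori $u\neq v$.

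For the converse I would argue by induction on $|u|+|v|$, using that trace monoids are cancellative together with the following description of prefixes by a single letter: a letter $a$ is a prefix of a trace $t$ exactly when $a$ occurs in $t$ and, writing a representative word and cutting at the first occurrence of $a$ as $t=t_2\,a\,t_3$ with $t_2$ free of $a$, every letter of $t_2$ commutes with $a$. The base case is $u=1$ (then $u$ is a strict prefix of $v$, case (1)), or symmetrically $v=1$. For the step, take $u,v\neq 1$ and a letter $a$ that is a prefix of $u$, say $u=a u_1$. If $a$ is also a prefix of $v$, say $v=a v_1$, then $u_1\neq v_1$ by cancellation, $|u_1|+|v_1|<|u|+|v|$, and the induction hypothesis applied to $(u_1,v_1)$ yields one of the conditions, which transfers to $(u,v)$ by replacing the distinguished common factor $w$ by $a w$ (or, in case (1), by noting that $a u_1$ is a strict prefix of $a v_1$). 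If $a$ is not a prefix of $v$: when $a$ does not occur in $v$ at all, take $w=1$, $w_2=v$, $w_1=u_1$ and land in (3); when $a$ does occur in $v$, cut a representative of $v$ at the first $a$ as $v=v_2\,a\,v_3$ with $v_2$ free of $a$, and observe that $v_2$ must contain a letter $b$ not commuting with $a$ (otherwise $a$ would be a prefix of $v$), necessarily $b\neq a$, so $w=1$, $w_2=v_2$, $w_3=v_3$, $w_1=u_1$ gives (2). This exhausts the step.

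The ``if'' direction is bookkeeping; the real content is the converse, and within it the single delicate point is the prefix-by-a-single-letter criterion used in the last case: recognizing that a letter which occurs in a trace but is not one of its prefixes is necessarily blocked by some non-commuting letter lying before its first occurrence is precisely what manufactures the witness $b$ demanded by condition (2). I expect no further obstacle, the remainder being routine manipulation of cancellativity, length and letter counts.
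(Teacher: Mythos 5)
Your proof is correct and follows essentially the same route as the paper's: both locate a maximal common prefix of $u$ and $v$ (the paper invokes Levi's Lemma for its existence, you extract it letter by letter via induction and left cancellativity) and then split on whether the letter $a$ heading the remainder of $u$ is absent from, or blocked by a non-commuting letter in, the remainder of $v$. The only substantive addition is that you actually justify the sufficiency direction (which the paper dismisses as clear) using length, letter counts, and the projection onto the free monoid $\{a,b\}^*$, all of which is sound.
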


\begin{proof}
Clearly, the conditions \ref{it:a1}, \ref{it:a2}  and \ref{it:a3} are sufficient. We prove that they are necessary.
Let $w$ be the longest common prefix of $u$ and $v$
(such a prefix is guaranteed by  Levi's Lemma, see \cite[Prop. 1.3]{cp85}). If $w=u$ this is condition \ref{it:a1}. If 
$u=waw_{1}$  and $v= ww_{2} $ where $w_{2}$ contains no $a$ this is item \ref{it:a3}.
Otherwise set $v=w  w_{2}a w_{3}$
where $w_{2}$ contains no occurrence of $a$. Then $w_{2}$ certainly contains some $b$
which does not commute with $a$.
\end{proof}

\subsection{Trace monoids as logical structures}

The logical structures on  trace monoid considered in this work contain at least the equality as nonlogical symbol, 
 the product of traces as function and all their elements as constants.

We consider the structure $\+M=$\str{\M(\Sigma,I);=,\cdot, \{u\}_{u\in \M(\Sigma,I)}}.
We are given a countable set of elements called \emph{variables}. The family of 
\emph{terms} is defined inductively by the conditions: a variable is a term, a constant
is a term and if $t$ and $t'$ are two terms, so is $t\cdot t'$. If needed, we may write
a term as $t(x_{1}, \ldots, x_{n})$ to signify that the variables occurring in $t$
are among $x_{1}, \ldots, x_{n}$. A \emph{basic predicate} is of the form $t=t'$
where $t$ and $t'$ are two terms.

As much as possible, we use the lower case $a,b, \ldots$ for elements in 
$\Sigma$ which we do not distinguish from generators of $ \M(\Sigma, I)$, 
and lower case letters $u,v, \dots$ for traces.

\subsection{The case of finitely generated trace monoids}

  We recall the result of Diekert et al. \cite{diekert99}.
In this case, the generator set   $\Sigma$ 
is finite and its decomposition consists of 
the union of all singletons $\{a\}$ for $a\in \Sigma$.
The  structure has additional predicates whose definition requires the following
notion.

\begin{definition}
\label{de:regular-constraints}
A subset
$K$ of $\M(\Sigma, I)$ is \emph{regular} 
if it is the
 image, in the canonical mapping of $\Sigma^*$ onto
$\M(\Sigma, I)$,  of  a  regular 
subset $K_{0}\subseteq \Sigma^*$ which is $I$-\emph{closed}, in the sense that the condition
($u \in K_{0}$ and $u\equiv_{I} v$) implies $v\in K_{0}$. The family of 
regular subsets is denoted by $\+K$. 
\end{definition}

Observe that each singleton of $\M(\Sigma, I)$ is regular 
since it is the canonical image of  a finite subset of $\Sigma^*$.

\begin{theorem}[\cite{diekert99}]
\label{th:finite-case}
Given a finitely generated trace monoid, the existential fragment of the first-order theory of the
structure $\langle \M(\Sigma, I); =, \cdot, \{K\}_{K\in \+K }\rangle $, where each $K$ is viewed as a unary predicate, is decidable.
\end{theorem}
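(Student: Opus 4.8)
\smallskip
\noindent We sketch the argument, which is essentially that of \cite{diekert99}. The plan is to strip away the Boolean and constraint structure until nothing remains but a system of equations with regular constraints, and then to invoke the combinatorial core of the theory (Makanin's algorithm, Schulz's handling of regular constraints, and their adaptation to partial commutation).

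First I would put the given existential sentence $\exists x_1\cdots\exists x_n\,\varphi$ into disjunctive normal form with respect to its atoms, which are the basic predicates $t=t'$ and the membership atoms $K(t)$ with $K\in\+K$. Since a disjunction is valid iff one of its disjuncts is, it suffices to decide satisfiability of a single conjunction of literals of four kinds: positive equations $t=t'$, negated equations $t\neq t'$, positive constraints $K(t)$, and negated constraints $\neg K(t)$. Two easy normalisations apply. The family $\+K$ is effectively closed under complement, because the complement of a regular $I$-closed language is again regular and $I$-closed; hence $\neg K(t)$ may be replaced by $K'(t)$ with $K'=\M(\Sigma,I)\setminus K$, eliminating negated constraints. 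Also, by introducing fresh variables bound by equations $y=t$ we may assume every surviving constraint is applied to a single variable.

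Next I would eliminate the negated equations using Lemma~\ref{le:different-traces}. Over $\M(\Sigma,I)$ each literal $u\neq v$ (with $u,v$ terms) is equivalent to a finite disjunction --- indexed by the finitely many generators $a\in\Sigma$ and by the exchange of $u$ and $v$ --- of existentially quantified conjunctions of one of the three shapes of the lemma: a fresh variable with the equation $v=u\cdot z$ and the constraint $z\in\Sigma^{+}$; or fresh variables with equations $u=w\cdot a\cdot w_1$, $v=w\cdot w_2$ and the constraint $w_2\in(\Sigma\setminus\{a\})^{*}$; or fresh variables with equations $u=w\cdot a\cdot w_1$, $v=w\cdot w_2\cdot a\cdot w_3$ and the constraint $w_2\in(\Sigma\setminus\{a\})^{*}\cap\Sigma^{*}B_a\Sigma^{*}$, where $B_a=\{b\mid(a,b)\notin I\}$. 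Crucially, every language appearing here is regular and $I$-closed, since membership is determined by letter content and by length, both invariant under commutation; so each belongs to $\+K$ and is effectively presented by an $I$-closed automaton. Substituting these equivalences and re-applying the disjunctive-normal-form step, the whole problem is reduced to deciding satisfiability of a finite conjunction of positive equations over $\M(\Sigma,I)$ with regular constraints on the variables.

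It remains to decide whether such a system $s_1=t_1,\dots,s_k=t_k$ with regular constraints has a solution, and this is where I expect the real difficulty to be: the logical bookkeeping above is routine, whereas solvability of even a single equation with regular constraints over a finitely generated trace monoid is the deep input. One first reduces a system to a single equation in a standard way --- enlarge $\Sigma$ by a fresh generator $\#$ forming its own singleton class and independent of nothing, so that $\M(\Sigma,I)$ embeds in the larger trace monoid and the system becomes equisatisfiable with $s_1\#s_2\#\cdots\#s_k=t_1\#t_2\#\cdots\#t_k$ under the ($I$-closed, regular) constraint that no variable contains $\#$ --- though the algorithm in fact accepts finite systems directly. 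The equation is then attacked by the Makanin machinery: one bounds the exponent of periodicity of a minimal solution, which renders a Makanin-style system of transformations on generalised equations finitely branching and terminating; Schulz's method \cite{schulz90} shows how to propagate the regular constraints through these transformations; and the passage from free to partially commutative monoids --- the genuine contribution of \cite{diekert99} --- is handled either by re-encoding a trace equation as a word equation with regular constraints via lexicographic normal forms and the dependence relation, or by lifting Makanin's combinatorics to the trace setting directly. Threading the effective reductions above through this algorithm yields the decidability claimed, the only genuinely hard ingredient being the termination of the constraint-equipped Makanin procedure.
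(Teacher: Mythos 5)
The paper does not prove this statement: it is imported verbatim from \cite{diekert99}, so there is no in-paper argument to compare against. Your sketch is a faithful reconstruction of the standard proof of that result --- DNF reduction, closure of $I$-closed regular languages under complement, elimination of negated equations via Lemma~\ref{le:different-traces} with letter-content (hence $I$-closed) regular constraints, and reduction to trace equations with regular constraints handled by the Makanin--Schulz machinery --- and it correctly isolates the termination of the constraint-equipped Makanin procedure as the only deep ingredient, which is exactly what the citation is carrying.
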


\subsection{The case of infinitely generated trace monoids}

The objective is to prove the following.

\begin{theorem}
\label{th:infinite-case}
Given a trace monoid as in Definition \ref{de:infinite},
the existential first-order theory of the structure $\langle \M(\Sigma,I);=,\cdot,  \{u\}_{u\in \M(\Sigma,I)} \rangle$
is decidable.
\end{theorem}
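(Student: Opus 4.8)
The plan is to reduce the existential theory of the infinitely generated trace monoid $\M(\Sigma,I)$ to the existential theory of a suitable finitely generated trace monoid with regular constraints, so that Theorem~\ref{th:finite-case} applies. First I would put an existential sentence $\varphi$ into a normal form: a disjunction of conjunctions of basic predicates $t=t'$ and negated basic predicates $t\neq t'$. The positive part causes no difficulty, since any solution of a system of equations can be assumed to take values in the (finite) submonoid generated by the letters occurring in the constants of $\varphi$ — this is the standard observation recalled in the introduction. The real issue is the negated equalities.

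\medskip Next I would analyze each inequality $t\neq t'$ using Lemma~\ref{le:different-traces}. Given an assignment of the variables, $t$ and $t'$ evaluate to traces $u,v$, and $u\neq v$ holds exactly when one of the three conditions of the lemma holds. Each of these conditions is itself expressible by an existential formula: condition~\ref{it:a1} says $v=u\cdot z$ for some nonempty $z$ (nonemptiness being a regular constraint, or a finite disjunction over first letters); conditions~\ref{it:a2} and~\ref{it:a3} introduce fresh existential witness variables $w,w_1,w_2,w_3$, a letter $a$, and, in case~\ref{it:a2}, a letter $b$ with $(a,b)\notin I$, together with the constraints that $w_2$ avoids $a$ (a regular constraint) and, in case~\ref{it:a2}, that $w_2$ contains some such $b$ (again a finite union of regular constraints). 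The point is that the ranges of $a$ and $b$ need not be over the whole infinite alphabet: what matters is only the $I$-class of $a$ and of $b$, i.e.\ which $\Sigma_i$ they belong to, together with the information of which letters actually appear in the constants of $\varphi$ and in the other parts of the formula. Since there are only finitely many classes $\Sigma_1,\dots,\Sigma_n$ and finitely many constant letters, I would argue that it suffices to let $a,b$ range over a fixed finite set: one ``generic'' new letter $c_i$ for each class $\Sigma_i$, plus the finitely many letters occurring in $\varphi$. A short combinatorial lemma — the ``actual values of the constants are irrelevant'' observation promised in the introduction — shows that a solution over $\Sigma$ exists if and only if a solution exists over this enlarged-but-finite alphabet, because renaming a letter not occurring in the constants to any other letter of the same class $\Sigma_i$ preserves $\equiv_I$ and hence preserves satisfaction of all basic predicates.

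\medskip Carrying this out, I obtain from $\varphi$ an effectively constructible existential sentence $\psi$ over the finitely generated trace monoid $\M(\Sigma', I')$, where $\Sigma'$ is the finite alphabet just described and $I'$ is the independence relation induced on it by $I$, and where $\psi$ uses the regular constraints needed to encode ``contains no occurrence of $a$'' and ``contains an occurrence of some non-commuting $b$''. Each such constraint is genuinely regular in the sense of Definition~\ref{de:regular-constraints}: the set of traces avoiding a given letter is the canonical image of an $I$-closed regular language, and likewise for the others, and singletons are regular as already noted. Then $\varphi$ holds in $\langle \M(\Sigma,I);=,\cdot,\{u\}_{u}\rangle$ if and only if $\psi$ holds in $\langle \M(\Sigma',I');=,\cdot,\{K\}_{K\in\+K}\rangle$, and the latter is decidable by Theorem~\ref{th:finite-case}.

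\medskip The main obstacle I anticipate is the bookkeeping in the reduction step for the negated equalities: one must be careful that when several inequalities (and the positive equations) are combined, the fresh witness letters introduced for different inequalities, and the generic class-representatives, do not interfere with the constants or with each other, and that the ``any representative of the class works'' argument is applied to the whole conjunction at once rather than inequality by inequality. Getting the quantifier structure right — all the new witness variables are existential, and the case split over which of conditions~\ref{it:a1}--\ref{it:a3} applies and over which class $a$ (and $b$) lies in becomes a finite disjunction — is routine but must be done with care so that the resulting $\psi$ is again a genuine existential sentence of the finitely generated structure.
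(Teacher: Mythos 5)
Your overall strategy is exactly the paper's: normalize to basic (in)equalities, expand each negated equality via Lemma~\ref{le:different-traces} into existential witnesses with ``avoids $a$''-type regular constraints, argue that the witness letters only matter up to their class $\Sigma_i$ and their pattern of (in)equality, and hand the result to Theorem~\ref{th:finite-case}. But there is one concrete slip: letting $a,b$ range over \emph{one} generic letter $c_i$ per class $\Sigma_i$ (plus the constant letters) is not enough, and the reduction as you state it is incomplete. Already a single inequality can force two \emph{distinct} witness letters from the \emph{same} class: in case~\ref{it:a2} of the lemma you need $a\in\Sigma_i$, $b\in\Sigma_j$ with $(i,j)\notin I$ and $b\neq a$, and nothing prevents $i=j$. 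Concretely, take $\Sigma=\Sigma_1$ infinite with $I=\emptyset$ (the free monoid) and the constant-free sentence $\exists x\,\exists y\ (xy\neq yx)$: it is true over $\Sigma^*$ but false over $\{c_1\}^*$, so your finite alphabet loses satisfiability. Moreover, distinct inequalities in the same conjunction may require distinct witnesses of the same class under a single assignment $\theta$ (you cannot merge them, since $\theta$ is fixed once for the whole conjunction), so the number of needed representatives per class genuinely grows with the formula.

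The fix is small and is what the paper does: bound the number of witness letters by $4$ times the number of inequalities, and enumerate all possible ``types'' $R$ --- specifying, for this bounded set of letters, which are equal, which are distinct, and which $\Sigma_i$ each belongs to, consistently with the constants. Up to a permutation of $\Sigma$ fixing the constants and preserving each $\Sigma_i$, there are finitely many such types, and satisfaction of $\Phi$ is invariant under such permutations; this yields finitely many finite alphabets $\Gamma_1,\dots,\Gamma_\ell$ (each with possibly several letters per class) and one finitely generated instance per type. With that correction --- and your own caveat about applying the renaming argument to the whole conjunction at once, which is precisely where the issue lives --- your argument becomes the paper's proof.
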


\begin{proof}

We fix the decomposition  $\Sigma=\bigcup^{n}_{i=1} \Sigma_{i}$ and 
the independence relation $I$ as in Definition \ref{de:infinite}
and we write $\M$ for $\M(\Sigma,I)$ whenever
no confusion may arise.
A formula of the existential fragment of the 
theory of the trace monoid is equivalent to a formula of the form 
%%
%$$
%\exists x_{1}, \ldots, x_{n}\ (\bigvee_{k} \+E_{k}
%$$
%%
%where each $\+E_{k}$ is a conjunction
%hus we start with a system $\+E$   of the form
%
\begin{equation}
\label{eq:system-E}
\begin{array}{l}
\exists x_{1}, \ldots, x_{n}\ \displaystyle \bigvee^{p}_{k=1}   (\+E^{+}_{k} \wedge \+E^{-}_{k})\\
 \text{ where each } \+E^{+}_{k} \text{ is of the form}\\
\displaystyle \bigwedge_{i\in I} t_{i}(x_{1}, \ldots, x_{n}) =   s_{i} (x_{1}, \ldots, x_{n})\\
 \text{and  where each } \+E^{-}_{k} \text{ is of the form}\\
\displaystyle \bigwedge_{j\in J} t'_{j}(x_{1}, \ldots, x_{n})  \not=  s'_{j}(x_{1}, \ldots, x_{n}) \\
\end{array}
\end{equation}

Such a formula is \emph{satisfiable} if there exists  an interpretation 
$$\theta: \{x_{1}, \ldots, x_{n}\} \rightarrow  \M(\Sigma,I),
$$
such that  $\theta(t_{i})=   \theta(s_{i})$ for   $i\in I$ and  $ \theta(t'_{j})\not=    \theta(s'_{j})$
for   $j\in J$. 

\medskip For simplification purposes we replace the basic predicates by simpler predicates
by applying the following rules. 

\begin{itemize}
	\item
 Constants: a constant $u\not=1$ in a term is replaced 
by a new variable $x$ and by adding the new condition $x=u$. All occurrences of the empty
trace $1$ are deleted except if the left- or right-hand side is reduced to $1$.

\item Equation $y_{1}\cdots y_{n}=y_{n+1}\cdots y_{m}$ is replaced by the 
condition $\bigwedge^{n}_{i=1}y_{i}=1$ if the right-hand side is reduced to the empty 
trace. Similarly if the left-hand side is reduced to $1$. Otherwise
the equation is of the form
$y_{1}\cdots y_{n}= y_{n+1}\cdots y_{m}$ with $1< n<m$. 
Then we  introduce $m$ new variables $z_{i}$ and replace the equation by  a  conjunction
of simple predicates
\begin{equation}
\label{eq:product-by-increment}
\begin{array}{l}
z_{1}=y_{1} \wedge z_{2}=z_{1} y_{2} \wedge \cdots \wedge z_{n}=z_{n-1} y_{n}\\
\wedge\  z_{n+1}=y_{n+1} \wedge z_{n+2}=z_{n+1} y_{n+2} \wedge \cdots \wedge z_{m}=z_{m-1} y_{m}\\
\wedge\  z_{n}= z_{m}
\end{array}
\end{equation}

\item Inequation $y_{1}\cdots y_{n}\not=y_{n+1}\cdots y_{m}$ is replaced by the 
condition $\bigvee^{n}_{i=1}y_{i} \ne 1$ if the right-hand side is reduced to the empty 
trace. Similarly if the left-hand side is reduced to $1$. Otherwise
the inequation is of the form 
$y_{1}\cdots y_{n}\not= y_{n+1}\cdots y_{m}$ with $1< n<m$. 
As in the case of equality, we  introduce $m$ new variables $z_{i}$ and replace the equation by  a  conjunction
of simple predicates
\begin{equation}
\label{eq:product-by-increment}
\begin{array}{l}
z_{1}=y_{1} \wedge z_{2}=z_{1} y_{2} \wedge \cdots \wedge z_{n}=z_{n-1} y_{n}\\
\wedge\  z_{n+1}=y_{n+1} \wedge z_{n+2}=z_{n+1} y_{n+2} \wedge \cdots \wedge z_{m}=z_{m-1} y_{m}\\
\wedge\  z_{n}\not= z_{m}
\end{array}
\end{equation}

\end{itemize}

Equalities between two variables can be deleted by keeping 
one of the two variables only. By applying the previous rules and the routine Boolean rules, we may 
rewrite expression \ref{eq:system-E} in such a way that all  its basic predicates  are of the form
$x=u$ or $xy=z$ where $x,y,z$ are variables and $u$ is a constant, possibly
equal to $1$, formally

\begin{equation}
\label{eq:positive-clauses}
%\begin{array}{l}
\+E_{k}^{+} \equiv \bigwedge_{i\in J^{(k)}_{1}}  x_{i}=u_{i} \wedge   \bigwedge_{(i,j,h)\in J^{(k)}_{2}} x_{i}y_{j}=z_{h}
%\end{array}
\end{equation}
and
\begin{equation}
\label{eq:negative-clauses}
\+E_{k}^{-} \equiv \bigwedge_{i\in J^{(k)}_{3}}  x_{i}\not=1 \wedge   \bigwedge_{(i,j)\in J^{(k)}_{4}} x_{i}\not=y_{j}
\end{equation}

 Consequently, we may assume that we start off with a disjunction 
 of  \ref{eq:positive-clauses} and  \ref{eq:negative-clauses}, i.e., 
 of the form $\Phi\equiv \exists x_{1}, \ldots, x_{n}\  (\+E^{+} \wedge \+E^{-})$ by omitting  the index $k$.

Let $\Delta\subseteq \Sigma $ be the 
set of generators appearing as a factor of some constant in $\Phi$, i.e., 
 the smallest subset $\Sigma'\subseteq \Sigma$ satisfying the inclusion
$$
\begin{array}{l}
\{u_{i}\mid i\in  J_{1}\}  \subseteq \M(\Sigma',I\cap (\Sigma'\times  \Sigma')).
\end{array}
$$
We claim that there exists a finite collection of finite
subsets $\Delta \subseteq \Gamma_{i}\subseteq \Sigma$, $i=1 \ldots,  \ell$,
and existential formulas $\Phi_{i}$, $i=1 \ldots,  \ell$ 
in $\langle \M_{i}, \cdot,  \+K_{i} \rangle$
where $\M_{i}=\M(\Gamma_{i},  I \cap (\Gamma_{i}\times \Gamma_{i}))$ and
%
%\begin{itemize}
%\item $\Delta\subseteq \Gamma_{i}\subseteq \Sigma$ is finite, 
$ \+K_{i}=\{L\subseteq \M_{i} \mid L\in \+K\}$,
%
%\end{itemize}
%
%where $ \+K_{i}$ is the collection of subsets of  the monoid $\M(\Gamma_{i},I\cap (J_{i}\times J_{i})$
%which  belong to $\+K$, 
such that the following property holds for all assignments $\theta: \{x_{1}, \ldots, x_{n}\} \rightarrow \M(\Sigma,I) $
\begin{equation}
\label{eq:reduction-to-finitely-many-expressions}
\begin{array}{c}
\langle \M(\Sigma,I); \cdot,  \{u\}_{u\in \M(\Sigma,I)}\rangle  \models \Phi(\theta(x_{1}),\ldots, \theta(x_{n}))    \\
\text{ iff there exists some } i\in \{1, \ldots, \ell\} \text{ and some interpretation } \\
\theta_{i}: \{x_{1}, \ldots, x_{n}\}  \rightarrow  \M_{i} \text{ such that }\\
\langle \M_{i}; \cdot,  \+K_{i}\rangle  \models 
\Phi_{i}(\theta_{i}(x_{1}),\ldots, \theta_{i}(x_{n}))  
\end{array}
\end{equation}
%
%Let $\theta$ be an assignment of the variables that satisfies 
%\ref{eq:positive-clauses} and \ref{eq:negative-clauses}, i.e.,
%
The idea is to restate the different conditions of inequality for traces.
We fix an assignment $\theta$.
For each inequality $x\not=1$ in \ref{eq:negative-clauses},
the condition  $\theta(x)\not=1$ 
is satisfied if and only if  there exist a generator $a\in \Sigma$
and an element $v\in \M$ such that 

\begin{equation}
\label{eq:x-not-1}
\theta(x)=a v
\end{equation} 

Define

\begin{itemize}

\item $\Delta^{(1)}_{\theta}$ is the subset consisting of all $a\in \Sigma$
appearing in \ref{eq:x-not-1} when $x\not=1$ ranges over all inequalities of this form in
\ref{eq:negative-clauses}

\item    $U^{(1)}_{\theta}$ is the subset consisiting of all $v\in \M$
appearing in \ref{eq:x-not-1} when $x\not=1$ ranges over all inequalities of this form in
\ref{eq:negative-clauses}

\end{itemize}

Similarly, for each inequality of the form $x\not=y$ in \ref{eq:negative-clauses},
by Lemma \ref{le:different-traces} 
the condition  $\theta(x)\not=\theta(y)$ is equivalent to the disjunction
of the following three cases.

\nl Case 1: there exist an element $v\in \M$
and a generator $a\in \Sigma$ such that 

\begin{equation}
\label{eq:x-not-y1}
 \theta(y)=\theta(x) a v    
\end{equation}

\nl  Case 2: there exist two integers $(i,j)\not\in I$,
two generators $a,b\in \Sigma$
and five elements $u,v,w,r, t\in \M$ such that 
\begin{equation}
\begin{array}{l}
\label{eq:x-not-y2}
\theta(x)=ua v,  \theta(y)=u  r b t  a w,   a\in \Sigma_{i},  b\in \Sigma_{j}  \text{ and } r,t\\
 \text{  contain no occurrence of } a  
\end{array}
\end{equation}
\nl Case 3:  there exist $a\in \Sigma$,  $u,v,w \in \M$
such that $w$ contains no occurrence of $a$ and 
\begin{equation}
\label{eq:x-not-y3}
  \theta(x)=u av \text{ and }  \theta(y) = uw 
\end{equation}

Define

\begin{itemize}

\item $\Delta^{(2)}_{\theta}$ is the subset consisting of all $a$ and $b$ in $\Sigma$
appearing in \ref{eq:x-not-y1}, \ref{eq:x-not-y2} and  \ref{eq:x-not-y3}, when 
$x\not=y$ ranges over all inequalities of this form in
\ref{eq:negative-clauses}

\item    $U^{(2)}_{\theta}$ is the subset consisting of all $r,t,u, v, w\in \M$
appearing in \ref{eq:x-not-y1}, \ref{eq:x-not-y2} and  \ref{eq:x-not-y3}
 when $x\not=y$ ranges over all inequalities of this form in
\ref{eq:negative-clauses}

\end{itemize}

Set 
$$
\Delta_{\theta}= \Delta^{(1)}_{\theta}\cup \Delta^{(2)}_{\theta}\subseteq \Sigma,  \quad 
U_{\theta}= U^{(1)}_{\theta} \cup U^{(1)}_{\theta}\subseteq \M
$$

The cardinality of $\Delta_{\theta}$  is bounded by the integer $K$ which is 4 times the number of inequalities
in expression \ref{eq:negative-clauses} because each inequality of the form $x\not=1$
introduces one generator and  each inequality of the form $x\not=y$
introduces four not necessarily different generators.
Observe that  $\theta$ satisfies $\Phi$ if and only if  
so does $\pi\circ \theta$ where $\pi$ maps all generators in 
$\Sigma \setminus (\Delta\cup \Delta_{\theta})$ to the empty 
trace, so that we can, from now on, assume that $\theta$
maps each variable to the submonoid generated by 
$\Delta\cup \Delta_{\theta}$.

Let 
$ R_{\theta} $ 
be the predicate that specifies for all pairs of generators $(e,f)$
in $\Delta\cup \Delta_{\theta}$ whether they are equal or different
and which sub-alphabet $\Sigma_{i}$ they belong to.
A permutation $\sigma$ of $\Sigma$ is \emph{respectful}
if it fixes each element of $\Delta$
 and if it 
respects the membership to a specific $\Sigma_{i}$, i.e., if 1)
$\sigma(e)=e$ if $e\in \Delta$ 
and  2) for all $e\in \Sigma$ and for all $i=1, \ldots, n$ we have  $e\in \Sigma_{i}$ if and only if
$\sigma(e)\in \Sigma_{i}$. Now observe that up to a respectful permutation, 
the number of possible predicates $ R_{\theta}$ is finite: this is due to the fact
that $R_{\theta}$ involves a number of generators bounded by a function of the size of the formula
$\Phi$. Furthermore, if $\theta$ satisfies $\Phi$ so does $\sigma \circ \theta$. 
Consequently, when $\theta$ ranges over the possible assignments 
satisfying $\Phi$, up to a respectful permutation there exists a finite
number of different predicates  $R_{\theta}$, say  $R_{1}, \ldots, R_{\ell}$,
and each predicate involves a finite number  of elements of $\Sigma$.

Consequently, 
if $\Phi$ is satisfiable,  it is satisfiable by some assignment $\theta$
which maps the variables into the submonoid generated
by a finite subset $\Gamma_{i}$ containing $\Delta$ and satisfying a predicate $R_{i}$ 
for some 
$i\in \{1, \ldots, \ell\}$. We show that the existence of such an assignment
is decidable. Indeed, observe that the actual subset $\Gamma_{i}$
is irrelevant as long as it satisfies $R_{i}$. This means that we can consider 
the elements $a$ and $b$ as in expressions \ref{eq:x-not-1}, \ref{eq:x-not-y1}, 
\ref{eq:x-not-y2} and \ref{eq:x-not-y3} as fixed constants of $\Sigma$.
It remains to define the existential formulas  $\Phi_{i}$ 
in the structure $\langle \M(\Gamma_{i},I\cap (\Gamma_{i}\times \Gamma_{i}); \cdot, \+K_{i}\rangle$
as in the above claim \ref{eq:reduction-to-finitely-many-expressions}. 
This is achieved as follows. We keep the clauses \ref{eq:positive-clauses}, 
modify the clauses in \ref{eq:negative-clauses} as below and 
prefix the resulting formula by as many existential quantifiers as there are 
new variables. Concerning the modification of  
the clauses in \ref{eq:negative-clauses}, each inequality $x\not=1$ is replaced by a condition
$$
x=a z_{1}
$$
where $z_{1}$ is a new variable. 
Similarly,  each inequality  $x\not=y$ in \ref{eq:negative-clauses}
is replaced by a disjunction
$$
\begin{array}{l}
(y=x a z_{1})    \vee
(x=z_{1} a z_{2},  \wedge y =z_{1}  z_{3} b z_{4}  a z_{5} \wedge
z_{3}, z_{4} \in \M(\Gamma_{a},  I\cap (\Gamma_{a} \times \Gamma_{a}))) \\
\vee (x =z_{1} az_{2} \wedge y= z_{1}z_{3} \wedge z_{3} \in \M(\Gamma_{a},  I\cap (\Gamma_{a} \times \Gamma_{a}))) 
\end{array}
$$
where $\Gamma_{a}= \Gamma_{i} \setminus \{a\}$ and $z_{1}, \ldots,  z_{4}$
are new variables. Observe that no condition on the generators such as $a$ and $b$
above is required because these conditions are already covered by the 
predicate $R_{i}$.

We may now safely apply the result  \cite{diekert99} because  the only new predicates
$\M(\Gamma_{a},  I\cap (\Gamma_{a} \times \Gamma_{a})) $ are clearly regular 
in the finitely generated trace monoid $\M(\Gamma_i,  I\cap (\Gamma_i\times \Gamma_i)) $.

 \end{proof}

\section{An application to ordinals}

We denote by Ord the class  of ordinals. For a thorough exposition of 
ordinals we refer to the classical handbooks such as \cite{sie} and \cite{ros}.

\subsection{Arithmetic operations on the ordinals}

The following definition of the \emph{Cantor normal form}, abbreviated CNF,  is actually  a property
in its own right.

\begin{definition}
\label{de:cantor-normal-form}
Every nonzero ordinal $\alpha$  has a unique form as a sum of
$\omega$-powers with integer coefficients, namely
$$
\alpha = \omega^{\lambda_{r} } a_{r} +  \cdots + \omega^{\lambda_{1}} a_{1}, \quad 
$$
where $\lambda_{r} >  \cdots  >\lambda_{1}\geq 0$ are ordinals and
$a_{r},   \ldots, a_{1} >0$ are integers. 
%The ordinal $\lambda_{r}$
%is the \emph{degree} of $\alpha$, written $\partial(\alpha)$ and 
%$\lambda_{1}$ its \emph{valuation} written $v(\alpha)$.
A nonzero ordinal is a \emph{successor} if $\lambda_{1}=0$, otherwise it is a \emph{limit}. 

\end{definition}

\medskip We recall the definition of the multiplication on  ordinals
by use of their Cantor normal form.

%
%\begin{definition}
%\label{de:sum}
%The \emph{sum} of $\alpha$ and $\beta$ is the ordinal whose CNF is 
%%
%\begin{align*}
%\alpha +\beta &= \omega^{\lambda_{r} } a_{r} +  \cdots + \omega^{\lambda_{i}} a_{i} +
%\omega^{\mu_{s} } b_{s} +  \cdots + \omega^{\mu_{1}} b_{1}  & \text{ if } \lambda_{i}> \mu_{s} >\lambda_{i-1}\\
%&= \omega^{\lambda_{r} } a_{r} +  \cdots + \omega^{\lambda_{i+1}} a_{i+1} +
%\omega^{\mu_{s} } (a_{i} + b_{r}) +  \cdots + \omega^{\mu_{1}} b_{1} & \text{ if }  \mu_{s} =\lambda_{i}\\
%&= \beta  & \text{ if }  \mu_{s} >\lambda_{r}
%\end{align*}
%%
%\end{definition}

%\begin{definition}
%\label{de:product}
% Assume first that $\mu_{1}>0$. Then
% the \emph{product} of $\alpha$ and $\beta$ is the ordinal whose CNF is 
%%
%\begin{align*}
%\alpha +\beta &= \omega^{\lambda_{r} } a_{r} +  \cdots + \omega^{\lambda_{i}} a_{i} +
%\omega^{\mu_{s} } b_{s} +  \cdots + \omega^{\mu_{1}} b_{1}  & \text{ if } \lambda_{i}> \mu_{s} >\lambda_{i-1}\\
%&= \omega^{\lambda_{r} } a_{r} +  \cdots + \omega^{\lambda_{i+1}} a_{i+1} +
%\omega^{\mu_{s} } (a_{i} + b_{r}) +  \cdots + \omega^{\mu_{1}} b_{1} & \text{ if }  \mu_{s} =\lambda_{i}\\
%&= \beta  & \text{ if }  \mu_{s} >\lambda_{r}
%\end{align*}
%%
%\end{definition}
%%
%The sum is associative, has a neutral element $0$ and is not commutative.
%It is left-cancelative (but not right-)
%($\alpha+ \beta=\alpha+ \gamma$ implies $\beta= \gamma$.

%
\begin{definition}
\label{de:product}
Let 
$$
\alpha = \omega^{\lambda_{r} } a_{r} +  \cdots + \omega^{\lambda_{1}} a_{1}, \quad 
\beta = \omega^{\mu_{s} } b_{s} +  \cdots + \omega^{\mu_{1}} b_{1}, 
$$
be two nonzero ordinals written in CNF.
If  $\mu_{1}>0$ we have
%. Then
% the \emph{product} of $\alpha$ and $\beta$ is the ordinal whose CNF is 
%
\begin{equation}
\label{eq:product1}
\alpha \times \beta =  \omega^{\lambda_{r} + \mu_{s} } b_{s} +  \cdots + \omega^{\lambda_{r} +\mu_{1}} b_{1}.
\end{equation}
If  $\mu_{1}=0$ we have 
%
%\begin{equation}
\label{eq:product2}
\begin{multline}
\alpha \times \beta  =  \omega^{\lambda_{r}+\mu_{s}} b_{s} + \omega^{\lambda_{r}+\mu_{s-1}} b_{s-1} + 
 \cdots + \omega^{\lambda_{r}+ \mu_{2}} b_{2} + \\
 \omega^{\lambda_{r} } a_{r}b_{1} +  \omega^{\lambda_{r-1} } a_{r-1} +\cdots + \omega^{\lambda_{1}} a_{1}. 
\end{multline}
%\end{equation}
\end{definition}

\begin{remark}
The  multiplication is associative, has a neutral element $1$, is noncommutative,  
is left- (but not right-) \emph{cancellative} ($x\times y = x\times z\Rightarrow y=z$)
and left- (but not right-) distributes over the addition. With the definition 
of the multiplication it can be easily verified that an ordinal $\lambda$ is 
closed under multiplication ($\alpha, \beta<\lambda$ implies $\alpha\times  \beta<\lambda$)
if and only if it is of the form $\omega^{\omega^{\xi}}$ for some ordinal $\xi\geq 0$.
Also since we are concerned with effectivity, we assume that $\lambda$ is less than 
the ordinal $\varepsilon_{0}$ so that providing a Cantor Normal Form 
and performing operations such as comparing ordinals and finding divisors make sense. 
\end{remark}

\subsection{Primes}

\begin{definition}
\label{de:primes}
An ordinal $x$ is a \emph{prime}
if it has exactly two right divisors, i.e., 
two ordinals $z_{1}\not=z_{2}$ for which there exist
$y_{1}, y_{2}$ with
$x=y_{1}z_{1}=y_{2}z_{2}$.
\end{definition}

This definition of prime is equivalent, as can be readily verified, to the standard 
definition which stipulates that it  has exactly two right divisors $1$ and $x$.

\begin{definition}
\label{de:three-kinds-of-primes}
There are three kinds of primes, \cite[p. 336]{sie}.

\begin{itemize}

\item finite primes: the ordinary prime natural numbers

\item non-finite successor primes: of the form $\omega^{\lambda}+1$, $\lambda\in \text{ Ord}$.

\item limit primes: of the form $\omega^{\omega^{\xi}}$, $\xi\in \text{ Ord}$
\end{itemize}
\end{definition}

The main result concerning primes is the following
\begin{theorem}[the prime factorization \cite{jacob}]
\label{th:jacob}%
Every ordinal
has a unique factorization of the form
$$
(\omega^{\omega^{\xi_{1}}})^{n_{1}}\cdots (\omega^{\omega^{\xi_{r}}})^{n_{r}}
a_{k}(\omega^{\lambda_{k-1}} +1)a_{k-1}(\omega^{\lambda_{k-1}} +1) \cdots 
a_{1}(\omega^{\lambda_{1}} +1)a_{0}
$$
with $\xi_{1}> \xi_{2} > \cdots > \xi_{r}$ and 
	$\lambda_{1}, \lambda_{2}, \ldots, \lambda_{k} \geq 1$ (Greek letters are arbitrary ordinals and Latin letters are finite ordinals).
\end{theorem}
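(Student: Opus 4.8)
The plan is to work entirely with Cantor normal forms and the explicit multiplication formulas of Definition~\ref{de:product}, using only the left-cancellation law recalled above. The backbone is a \emph{separation step}: every nonzero ordinal $\alpha$ has a unique decomposition $\alpha=\omega^{\beta}\cdot\sigma$ in which $\beta$ is the least exponent appearing in the Cantor normal form of $\alpha$ and $\sigma$ is a successor ordinal. For existence, writing $\alpha=\omega^{\beta_r}a_r+\cdots+\omega^{\beta_1}a_1$ with $\beta_r>\cdots>\beta_1$, one checks from the multiplication formula that $\omega^{\beta_1}\cdot\bigl(\omega^{\mu_r}a_r+\cdots+\omega^{\mu_2}a_2+a_1\bigr)=\alpha$, where $\mu_i$ is the unique ordinal with $\beta_1+\mu_i=\beta_i$; the right-hand factor has least exponent $0$, hence is a successor. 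For uniqueness, if $\alpha=\omega^{\gamma}\cdot\sigma'$ with $\sigma'$ a successor then the multiplication formula forces $\gamma$ to be the least exponent of $\alpha$, and left-cancellativity then gives $\sigma'=\sigma$. Finally $\omega^{\beta_1}$ \emph{is} a product of limit primes: if $\beta_1=\omega^{\xi_1}n_1+\cdots+\omega^{\xi_r}n_r$ is its Cantor normal form then, using $\omega^{u+v}=\omega^{u}\omega^{v}$ and $\omega^{un}=(\omega^{u})^{n}$, we get $\omega^{\beta_1}=(\omega^{\omega^{\xi_1}})^{n_1}\cdots(\omega^{\omega^{\xi_r}})^{n_r}$ with $\xi_1>\cdots>\xi_r$; uniqueness of this block follows from uniqueness of the Cantor normal form of $\beta_1$ together with the absorption identity $\omega^{\omega^{\xi}}\omega^{\omega^{\eta}}=\omega^{\omega^{\eta}}$ for $\xi<\eta$, which shows any product of limit primes can be rewritten in strictly decreasing order.

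The second step factors the successor ordinal $\sigma$. The key is that right multiplication acts on Cantor normal forms in a transparent and invertible way. By the multiplication formula, right multiplying an ordinal by a natural number $n$ multiplies its leading coefficient by $n$ and leaves everything else unchanged; and, for a non-finite successor prime (Definition~\ref{de:three-kinds-of-primes}), if $\tau$ has leading exponent $\delta$ then $\tau\cdot(\omega^{\lambda}+1)=\omega^{\delta+\lambda}+\tau$ (using left distributivity and $\tau\omega^{\lambda}=\omega^{\delta+\lambda}$), so right multiplication by $\omega^{\lambda}+1$ just prepends a new leading term with exponent $\delta+\lambda$ while preserving the rest. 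Reading these backwards yields a deterministic right-to-left peeling of a successor $\sigma$: let $a_0$ be the leading coefficient of $\sigma$ and replace $\sigma$ by the ordinal obtained by dividing its leading coefficient by $a_0$ (now $1$); if the result has a single Cantor-normal-form term it equals $1$ and we stop, otherwise its top exponent strictly exceeds its second exponent $\delta'$, so $\lambda_1$ is the unique ordinal with $\delta'+\lambda_1$ equal to the top exponent (necessarily $\lambda_1\geq 1$), and deleting the top term leaves a successor with one fewer term, on which we recurse to produce $a_1,\lambda_2,a_2,\dots$. This terminates by induction on the number of Cantor-normal-form terms, the base case being a finite ordinal, which is a product of finite primes. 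Because each choice along the way was forced, the data $k$, the $a_i$ and the $\lambda_i$ are uniquely determined (and each $a_i$, if one wishes, factors further and uniquely into finite primes).

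Combining the two steps gives the factorization of $\alpha$ in the required form. For uniqueness, one observes from the multiplication formula that a product of ordinals is a successor precisely when every factor is a successor; hence in any presentation of $\alpha$ as (a decreasing product of limit primes) times (an interleaving of finite ordinals and non-finite successor primes), the second part must be a successor ordinal and the first part must be its complementary additively indecomposable left divisor $\omega^{\beta_1}$, so the separation step pins down both parts, and then the rigidity statements of the two steps pin down all the parameters. The main obstacle is establishing the two ``read-off'' lemmas in clean form --- the separation $\alpha=\omega^{\beta_1}\cdot\sigma$ with $\sigma$ a successor, and the precise invertibility of right multiplication by naturals and by non-finite successor primes on Cantor normal forms --- since all the force of the argument (termination of the peeling, and rigidity for uniqueness) is extracted from them; once they are available, the rest is bookkeeping with the left-cancellation law and with uniqueness of Cantor normal forms.
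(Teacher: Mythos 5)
The paper offers no proof of this statement at all: it is imported as a classical theorem of Jacobsthal, with a citation to \cite{jacob}, so there is no in-paper argument to compare yours against. Judged on its own, your proof is the standard Cantor-normal-form argument and is correct in outline. The separation $\alpha=\omega^{\beta_1}\cdot\sigma$ with $\beta_1$ the least CNF exponent and $\sigma$ a successor follows exactly as you say from the second multiplication formula of Definition~\ref{de:product} together with left-cancellation; identifying $\omega^{\beta_1}$ with the decreasing limit-prime block via $\omega^{u+v}=\omega^{u}\omega^{v}$ and uniqueness of the CNF of $\beta_1$ is right, and your absorption remark correctly explains why the ordering constraint $\xi_1>\cdots>\xi_r$ is what makes that block canonical (cf.\ the paper's own example $\omega^{\omega}=\omega\times\omega^{\omega}$). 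Your two read-off rules --- right multiplication by $n\geq 1$ scales the leading coefficient, right multiplication by $\omega^{\lambda}+1$ with $\lambda\geq 1$ prepends a leading term $\omega^{\delta+\lambda}$ --- are precisely what the identity displayed in the paper's proof of Theorem~\ref{th:exists-fragment-successors} encodes, your $\lambda_i$ being the successive differences of CNF exponents. Two points to make explicit in a full write-up: (i) ``dividing the leading coefficient by $a_0$'' presupposes right-cancellativity of positive integers, which does follow from your rule for naturals since $c\cdot a_0=c'\cdot a_0$ forces $c=c'$ in $\N$, but should be said; (ii) the claim that a product is a successor iff every factor is one needs both multiplication formulas (a limit last factor yields a limit; a successor last factor preserves the least exponent of the left factor) plus an induction over the number of factors, and it silently assumes all factors are nonzero. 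Finally, note that the theorem as printed contains a typo ($\omega^{\lambda_{k-1}}+1$ appears twice where $\omega^{\lambda_{k}}+1$ is surely intended); your argument proves the corrected statement.
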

 Observe that the condition on the exponents of limit primes is necessary: $\omega$ and
$\omega^{\omega}$ are primes and 
$\omega^{\omega}= \omega \times \omega^{\omega}$. 
%We say that $a_{k}(\omega^{\lambda_{k-1}} +1)a_{k-1}(\omega^{\lambda_{k-1}} +1) \cdots 
%a_{1}(\omega^{\lambda_{1}} +1)a_{0}$ is the  \emph{maximal successor right factor}
%of the ordinal.

\subsection{The monoid of successor ordinals as a trace monoid}

For every countable ordinal $\lambda$ closed under multiplication and less than $ \varepsilon_0$, let  $\+S_{\lambda}$ denote the set of all successor ordinals less than $\lambda$. 
The rule of multiplication on ordinals show that 
$\+S_{\lambda}$ forms a multiplicative submonoid.

Theorem \ref{th:jacob} can be interpreted as follows. The ordinal $\omega$
is the submonoid  generated by the finite  primes (the ordinary prime integers). Let $\+P_\lambda$
be the submonoid generated by the infinite successor primes less than $\lambda$. 
Then Theorem \ref{th:jacob}  claims that $\+S_\lambda$ is the free product 
of the  (infinitely generated) free commutative monoid $\omega$
and the (infinitely generated) free monoid $\+P_\lambda$.
We have 

\begin{theorem}
\label{th:exists-fragment-successors}
Given an ordinal $\lambda$ less than $\varepsilon_0$ and closed under multiplication, the existential theory of the structure $\langle \+S_{\lambda}:  \times,   \{\alpha\}_{\alpha \in \+S_{\lambda}}   \} \rangle$ is decidable.
\end{theorem}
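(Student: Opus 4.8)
The plan is to reduce Theorem~\ref{th:exists-fragment-successors} to Theorem~\ref{th:infinite-case} by exhibiting $\+S_\lambda$ as a trace monoid of the kind allowed by Definition~\ref{de:infinite}. Concretely, by Theorem~\ref{th:jacob} the monoid $\+S_\lambda$ is the free product of the free commutative monoid $\omega$ (generated by the ordinary prime integers) and the free monoid $\+P_\lambda$ (generated by the infinite successor primes less than $\lambda$). So I would set $\Sigma_1$ to be the set of finite primes, $\Sigma_2$ the set of infinite successor primes less than $\lambda$, $\Sigma=\Sigma_1\cup\Sigma_2$, $n=2$, and $I=\{(1,1)\}$ — this is exactly the first construction in the Example following Definition~\ref{de:infinite}, except with $\Sigma_1$ (rather than $\Sigma_2$) carrying the commutation, which is a cosmetic change. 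The canonical isomorphism $\M(\Sigma,I)\to\+S_\lambda$ sends a trace to the ordinal obtained by multiplying its letters in order; that this is a well-defined monoid isomorphism carrying concatenation to ordinal multiplication is precisely the content of the prime factorization theorem together with the product rules of Definition~\ref{de:product}. Under this isomorphism an existential sentence over $\langle\+S_\lambda;\times,\{\alpha\}_{\alpha\in\+S_\lambda}\rangle$ translates verbatim into an existential sentence over $\langle\M(\Sigma,I);=,\cdot,\{u\}_{u\in\M(\Sigma,I)}\rangle$, and Theorem~\ref{th:infinite-case} gives decidability of the latter.

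The remaining work is effectivity: Theorem~\ref{th:infinite-case} is a decidability statement about the abstract trace monoid, but to run the decision procedure on an input sentence whose constants are ordinals given in Cantor normal form, one must be able to compute the trace representation of each constant ordinal $\alpha<\lambda$. This is where the hypothesis $\lambda<\varepsilon_0$ is used. I would argue that for $\alpha<\varepsilon_0$ given by its CNF, one can effectively compute its prime factorization as in Theorem~\ref{th:jacob}: repeatedly strip off limit-prime factors $\omega^{\omega^\xi}$ on the left (reading the leading exponent of the CNF, which is itself an ordinal $<\varepsilon_0$ represented by its CNF, and decomposing it additively and then as $\omega^\xi\cdot m + (\text{lower})$), and strip off factors $(\omega^\delta+1)$ and finite coefficients from the right using the multiplication formulas~\eqref{eq:product1} and the displayed equation after it; each step strictly decreases a well-founded measure, so the process terminates. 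All the ordinal comparisons, additions, and the divisibility tests needed are computable on CNF codes as long as all exponents stay below $\varepsilon_0$, which they do since $\lambda<\varepsilon_0$. Hence the translation sentence is effectively constructible from the input.

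Once the constants have been converted to traces, I would simply invoke Theorem~\ref{th:infinite-case} applied to $\M(\Sigma,I)$ with the above decomposition. One small point to check is that the alphabet $\Sigma$ is genuinely countable — the finite primes form a countable set and the infinite successor primes $\omega^\delta+1$ with $\omega^\delta+1<\lambda<\varepsilon_0$ are indexed by the countably many ordinals $\delta<\varepsilon_0$ — so Definition~\ref{de:infinite} applies. I expect the main obstacle to be the effectivity argument of the previous paragraph, namely verifying carefully that computing the prime factorization of an ordinal from its CNF is an effective procedure and that it terminates; the logical reduction itself is essentially a dictionary translation via Theorem~\ref{th:jacob} and should be routine. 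It is worth remarking that $\langle\+S_\lambda;\times,\{\alpha\}\rangle$ has no extra regular constraints, so we only use the bare form of Theorem~\ref{th:infinite-case} rather than its proof's passage through regular predicates.
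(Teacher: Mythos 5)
Your proposal is correct and follows essentially the same route as the paper: identify $\+S_{\lambda}$, via the uniqueness of prime factorization (Theorem~\ref{th:jacob}), with the free product of the free commutative monoid on the finite primes and the free monoid on the infinite successor primes less than $\lambda$, check that translating the constants from CNF to prime factorization is effective below $\varepsilon_0$, and invoke Theorem~\ref{th:infinite-case}. The only difference is that where you propose an iterative stripping algorithm for the factorization, the paper short-circuits this with the explicit identity $\omega^{\lambda_r}a_r+\cdots+\omega^{\lambda_1}a_1+a_0=a_0(\omega^{\lambda_1}+1)a_1(\omega^{\lambda_2-\lambda_1}+1)\cdots(\omega^{\lambda_r-\lambda_{r-1}}+1)a_r$, which reads the prime factorization of a successor ordinal directly off its CNF (in particular successor ordinals have no limit-prime factors, so your limit-prime stripping step never fires).
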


\begin{proof}
Indeed, this is an immediate consequence of Theorem \ref{th:infinite-case} once we have 
observe the simple following relationship between the Cantor normal form and 
the prime factorization
$$
\omega^{\lambda_{r} } a_{r} +  \cdots + \omega^{\lambda_{1}} a_{1} + a_{0}
= a_{0}(\omega^{\lambda_{1}}+1)a_{1}(\omega^{\lambda_{2}-\lambda_{1}}+1)\cdots (\omega^{\lambda_{r}-\lambda_{r-1}}+1)a_{r}
$$
\end{proof}

\section{Open questions}

Via the unicity of the factorization for ordinals, the monoid of  successor ordinals
bears a strong resemblance to the free monoid: if one ignores the finite ordinals, 
one is left with a free infinitely generated monoid. This explains why, in the end, 
thanks to the improvements in \cite{schulz90} and \cite{diekert99}, we could 
resort to Makanin's result to obtain Theorem \ref{th:exists-fragment-successors}. 
What about the monoid of all ordinals? Or less ambitiously, what about 
solving equations in the monoid of ordinals with constants? 
More precisely, we are given  an equation 
$$
L(x_{1}, \ldots, x_{n}, \alpha_{1}, \ldots, \alpha_{p})= 
R(x_{1}, \ldots, x_{n}, \alpha_{1}, \ldots, \alpha_{p})
$$
where $L$ and $R$ are products of variables in 
$\{x_{1}, \ldots, x_{n}\}$ and constants  in $\{\alpha_{1}, \ldots, \alpha_{p}\}$
which belong to an ordinal $\lambda$ closed under multiplication.
We are looking for solutions where all variables take on 
nonzero values.

Without Makanin's result, the authors do not  know how to answer the question
even in  the specific case of the monoid of successor ordinals.
But this is maybe no indication that solving equations in 
the structure \str{\lambda;\times} with arbitrary constants
is at least as conceptually difficult as proving Makanin's result from scratch. 
We just make a couple of more or less trivial observations which tend to
show that the similarity of solving solutions in two structures (finite free monoids and
multiplicative ordinals) is maybe delusive.
For example, even if all constants are successor ordinals, it might be the case 
that the equation has no solution in the monoid of successor ordinals 
but has a solution  in $\lambda$, for example  $2x=x$.
More generally 
 it is not difficult, but boring, to prove that given an equation where the constants are limit ordinals,
it is  decidable in polynomial time relative to the number of unknowns
 whether or not it has a solution where the unknowns are themselves 
limit ordinals. Of course the equation could have only solutions in 
the successor ordinals  even if the constants were limit ordinals,
see $\omega x=\omega (\omega +1)$. More generally, considering specific 
submonoids $\Gamma$ of $\lambda$ for the constants such as the successors, the $\omega$-powers, 
the limit ordinals etc \ldots, and specific submonoids $\Xi$ for the values assumed 
by the variables, one can investigate whether or not an equation with constants 
in $\Gamma$ has a solution in $\Xi$.

\end{document}